\newcommand{\E}{\ensuremath{\mathbb{E}}}
\newcommand{\Prob}{\ensuremath{\mathbb{P}}}
\newcommand{\R}{\ensuremath{\mathbb{R}}}
\newtheorem{thm}{Theorem}[section]
\newtheorem{prop}[thm]{Proposition}
\newtheorem{coro}[thm]{Corollary}
\newtheorem{lem}[thm]{Lemma}
\begin{document}
\title{Probabilistic Analysis of the \\ Dual-Pivot Quicksort ``Count''}
\date{}
\author{Ralph Neininger and Jasmin Straub\\ Institute for Mathematics \\ Goethe University Frankfurt \\ Germany \\
\{neininger, jstraub\}{@}math.uni-frankfurt.de
}

\maketitle

\abstract{Recently, Aum\"uller and Dietzfelbinger proposed a version of a dual-pivot quicksort, called ``Count'', which is optimal among dual-pivot versions with respect to the average number of key comparisons required. In this note we provide further probabilistic analysis of ``Count''. We derive an exact formula for the average number of swaps needed by ``Count'' as well as an asymptotic formula for the variance of the number of swaps and a limit law. Also for the number of key comparisons the asymptotic variance and a limit law are identified. We also consider both complexity measures jointly and find their asymptotic correlation.}

\section{Introduction}\label{sec:Intro}
In 2009, Oracle replaced the sorting algorithm quicksort in its Java 7 runtime
library by a new dual-pivot quicksort due to Vladimir Yaroslavskiy. This was based on the good performance of the new dual-pivot version in experiments. Since then, various theoretical studies were devoted to explain, quantify, generalize and improve  the dual-pivot quicksort, starting with a rigorous average case analysis of Wild and Nebel \cite{WN12} for the numbers of comparisons and swaps in a uniform probabilistic model. However, classical cost parameters turned out to not be sufficient to explain the good performance of the new dual-pivot quicksort. Hence, authors also took the memory hierarchy of computer storage into account and studied cache effects, see Kushagra et al. \cite{KLQM14}, and quantities related to cache misses, in particular the number of scanned elements, see Nebel et al. \cite{WNM16} and \cite{ADK16}.

The basic idea to gain while using a dual-pivot quicksort is that during the partitioning stages elements may not need to be compared to both pivot elements. Thus, the theoretical question on how to optimally arrange the partitioning stages arose. Regarding the average number of key comparisons this has fully been solved in Aum\"uller et al. \cite{ADHKP16} where the dual-pivot quicksort version ``Count'' is identified to be optimal in this respect.

The dual-pivot quicksort ``Count'' uses two pivot elements which are chosen from the array as the first and the last element. Assume they are $p$ and $q$ with $p<q$. Then all other elements are compared to the pivot elements to classify them as smaller as $p$, between $p$ and $q$ (called medium) or being larger than $q$. Obviously, elements between $p$ and $q$ need to be compared to $p$ and $q$ to get classified. However, if an element smaller than $p$ is first compared to $p$, there is no need for also comparing it to $q$, the same for elements larger than $q$ if they are first compared to $q$. Now, to typically save some comparisons one keeps track on how many elements smaller than $p$ and larger than $q$ already have been identified. If you have classified more small than large elements so far, this indicates that the data are split unevenly by $p$ and $q$ with more small elements than large elements. Hence, in such a case, one compares the next element first to $p$, and vice versa. After partitioning the elements according to this rule the algorithm recurses on the lists of elements smaller than $p$, medium, and larger than $q$, see Algorithm~1.

The aim of the present note is to provide further probabilistic analysis of ``Count'' in the model of uniformly permuted (distinct) data. While there are little mathematical innovations required to study such quicksort variants there are still new  combinatorial structures to reveal and to study. Also, we feel that it is worth adding probabilistic analysis for ``Count'' due to its distinguished role. We study the numbers of comparisons and swaps required by ``Count'' with respect to expectations, variances, limit laws and correlations. Along our analysis also the number of scanned elements could be studied. In Section \ref{sec_res} our results are stated. Section \ref{sec:ana} contains the analysis and proofs of the theorems. At the end the version of ``Count'' used in the present note is stated explicitly (Algorithm~1) which differs slightly from the original version of Aum\"uller et al. \cite{ADHKP16}.

The results of this note are part of the second mentioned  author's master's thesis \cite{Straub}.

\section{Results} \label{sec_res}
Throughout, we assume that the data are distinct and uniformly permuted.
\subsection{Mean values}
The mean value for the number $\mathcal{C}_n$ of key comparisons has already been derived in Aum\"uller et al. \cite[Theorem 12.1]{ADHKP16}. For $n\geq4$, we have
\begin{align} 
	\mathbb{E}[\mathcal{C}_n] & =  \frac{9}{5}n\mathcal{H}_n-\frac{1}{5}n\mathcal{H}_n^{\text{alt}}-\frac{89}{25}n+\frac{67}{40}\mathcal{H}_n-\frac{3}{40}\mathcal{H}_n^{\text{alt}}-\frac{83}{800}+\frac{(-1)^n}{10} \nonumber\\
	&\quad\;\;~ -\frac{\mathbbm{1}_{\left\{n\; \mathrm{ even}\right\}}}{320}\left(\frac{1}{n-3}+\frac{3}{n-1}\right) +\frac{\mathbbm{1}_{\left\{n\; \mathrm{  odd}\right\}}}{320}\left(\frac{3}{n-2}+\frac{1}{n}\right)\nonumber\\
	&= \frac{9}{5}n\log(n)+A_cn+\frac{67}{40}\log(n)+\mathrm{O}(1)\quad (n\rightarrow\infty), \label{exp_mean_cn}
\end{align}
where $A_c=\frac{9}{5}\gamma+\frac{1}{5}\log(2)-\frac{89}{25}=-2.382...$,
\[ \mathcal{H}_n := \sum_{k=1}^{n}{\frac{1}{k}} \qquad \mbox{ and } \qquad \mathcal{H}_n^{\text{alt}}:=\sum_{k=1}^{n}{\frac{(-1)^k}{k}}. \]
Note, that, as $n\to\infty$, we have
\[ \mathcal{H}_n = \log(n) + \gamma + \mathrm{O}\left(\frac{1}{n}\right), \qquad \mathcal{H}_n^{\text{alt}}  = -\log(2)+\mathrm{O}\left(\frac{1}{n}\right),  \]
and $\gamma = 0.577...$ denotes the Euler--Mascheroni constant. \\
We add the corresponding formula for the mean number of swaps.

\begin{thm} \label{satz:swape} For the number of swaps $\mathcal{S}_n$ of the dual-pivot quicksort ``Count" in Algorithm~1  when sorting a random permutation of length $n$, for $n\geq4$, we have
\begin{align}\label{mean_swap} 
	\mathbb{E}[\mathcal{S}_n] &=  \frac{3}{4}n\mathcal{H}_n + \frac{1}{20}n\mathcal{H}_n^{\text{alt}} -\frac{4}{5}n+\frac{3}{4}\mathcal{H}_n+\frac{1}{20}\mathcal{H}_n^{\text{alt}} - \frac{23}{160}-\frac{(-1)^n}{40} \\
	&\quad\;\;~- \frac{\mathbbm{1}_{\left\{n\; \mathrm{ even}\right\}}}{320}\left(\frac{1}{n-3}+\frac{3}{n-1}\right) + \frac{\mathbbm{1}_{\left\{n\;\mathrm{ odd}\right\}}}{320}\left(\frac{3}{n-2}+\frac{1}{n}\right)\nonumber\\
	&= \frac{3}{4}n\log(n)+A_sn+\frac{3}{4}\log(n)+\mathrm{O}(1) \quad (n\rightarrow\infty),\nonumber
\end{align}
with $A_s=\frac{3}{4}\gamma-\frac{1}{20}\log(2)-\frac{4}{5}= -0.401...$
\end{thm}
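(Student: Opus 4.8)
The plan is to set up a distributional recurrence for $\mathcal{S}_n$ conditioned on the ranks of the two pivots, take expectations to obtain a full-history recurrence for $s_n := \mathbb{E}[\mathcal{S}_n]$, and then solve it explicitly. First I would fix the two pivots $p<q$; if their ranks partition the remaining $n-2$ elements into groups of sizes $i$ (small), $j$ (medium) and $n-2-i-j$ (large), then by the uniform model the algorithm recurses independently on three uniformly random permutations of those sizes, so
\begin{equation*}
\mathcal{S}_n \stackrel{d}{=} \mathcal{S}^{(1)}_{I_n} + \mathcal{S}^{(2)}_{J_n} + \mathcal{S}^{(3)}_{n-2-I_n-J_n} + T_n,
\end{equation*}
where the three families on the right are independent copies, $(I_n,J_n)$ is uniform on the appropriate simplex of pairs, and $T_n$ is the number of swaps used in the top-level partitioning step. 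Taking expectations and using the symmetry between the "small" and "large" subproblems collapses the two-parameter sum to a one-parameter one, giving $s_n = \frac{3}{n(n-1)}\sum_{k=0}^{n-2} w_{n,k}\, s_k + \mathbb{E}[T_n]$ for suitable combinatorial weights $w_{n,k}$ counting how often a subproblem of size $k$ arises.

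The genuinely new ingredient, and the main obstacle, is computing the toll term $\mathbb{E}[T_n]$ exactly. This requires a careful combinatorial accounting of how many swaps Algorithm~1 performs during one partitioning pass: one must track, for each of the three element types, where it currently sits relative to the two moving pointers, and count a swap each time a misplaced element is exchanged into its target zone. Because "Count" adaptively decides which pivot to compare against first (depending on which of the running counters of already-classified small/large elements is larger), the number of swaps depends on the order statistics of the input in a slightly subtler way than in Yaroslavskiy's variant; nonetheless, by linearity one can reduce $\mathbb{E}[T_n]$ to summing indicator probabilities of the form "the element of rank $r$ gets swapped", each of which is an elementary function of $r$, $n$ and the pivot ranks, averaged over the uniformly random pivot ranks. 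I expect this step to produce an expression of the shape $\mathbb{E}[T_n] = \alpha n + \beta + (\text{small oscillating correction})$, where the oscillating part is exactly the source of the $\mathbbm{1}_{\{n \text{ even}\}}$, $\mathbbm{1}_{\{n \text{ odd}\}}$ and $(-1)^n$ terms visible in \eqref{mean_swap}; this parity phenomenon mirrors the one already present in \eqref{exp_mean_cn} and is presumably handled with the same bookkeeping as in Aum\"uller et al.\ \cite{ADHKP16}.

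Once $\mathbb{E}[T_n]$ is in hand, the recurrence $s_n = \frac{3}{n(n-1)}\sum_{k=0}^{n-2} w_{n,k}\, s_k + \mathbb{E}[T_n]$ is a standard quicksort-type full-history recurrence. I would solve it by the usual device: multiply by $n(n-1)$, form the difference of consecutive equations (i.e.\ consider $n(n-1)s_n - (n-1)(n-2)s_{n-1}$ and possibly a second differencing because of the step size two in the weights) to turn it into a first-order linear recurrence in $s_n$, then telescope. The homogeneous solution contributes the $n\mathcal{H}_n$ and $\mathcal{H}_n$ terms with coefficient $\tfrac34$ (the leading constant $\tfrac34$ being forced by the $\alpha n$ part of the toll and the factor $3$ in the recurrence, via the standard $\frac{\alpha}{\,1-3/4\,}$-type balance), while the alternating harmonic pieces $\tfrac{1}{20}n\mathcal{H}_n^{\mathrm{alt}}$ and $\tfrac{1}{20}\mathcal{H}_n^{\mathrm{alt}}$ and the explicit $\mathbbm{1}_{\{\cdot\}}$ corrections come from integrating the oscillating part of $\mathbb{E}[T_n]$ against the Cesàro-type kernel of the recurrence. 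Finally I would verify the closed form for small $n$ (e.g.\ $n=4,5$) directly against the algorithm to pin down the additive constant $-\tfrac{23}{160}$, and read off the asymptotics, obtaining $A_s = \tfrac34\gamma - \tfrac1{20}\log 2 - \tfrac45$ from $\mathcal{H}_n = \log n + \gamma + \mathrm{O}(1/n)$ and $\mathcal{H}_n^{\mathrm{alt}} \to -\log 2$. The only real work is the toll-term count; everything after that is routine manipulation of harmonic sums.
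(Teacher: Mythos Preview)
Your overall architecture matches the paper's: set up the recurrence, compute the expected partitioning toll $\mathbb{E}[T_S(n)]$, then solve. The recurrence you write down is the right one, and the paper in fact quotes Wild's closed-form solution (\ref{closedform}) rather than differencing, but that is a matter of taste; your telescoping plan would work equally well once the toll is known.

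The real gap is in the toll computation, which you correctly flag as ``the genuinely new ingredient'' but then underestimate. Your proposal to write $\mathbb{E}[T_n]$ as a sum over ranks of probabilities ``the element of rank $r$ gets swapped'', each an elementary function of $r$, $n$ and the pivot ranks, does not go through as stated: whether a given element is swapped (and, for a small element, whether it costs one swap or a \emph{rotate3} costing $\tfrac32$ swaps) depends on whether it is first compared to $p$ or to $q$, and that in turn depends on the sign of the running counter $L_i-S_i$ at the moment the element is reached. This is a function of the entire prefix of the permutation, not of the element's rank alone, so the indicators are not ``elementary functions of $r$ and the pivot ranks''.

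What the paper does instead is to introduce the counts $S_n^+,M_n^+,L_n^+$ of small/medium/large elements that get compared to $q$ first, yielding the exact decomposition $T_S(n)=2+I_1^{(n)}+I_3^{(n)}+\tfrac12 S_n^+ +M_n^+ -L_n^+$. The expectations of these are then computed via the P\'olya--Eggenberger urn model for $(S_i,M_i,L_i)$, and the crux is Proposition~\ref{satz:einhalb}: the conditional probability $\Prob(L_{i+1}=L_i+1\mid L_i>S_i)=\tfrac12$, which the authors themselves call ``a surprising combinatorial identity for which we have not yet found an intuitive explanation''. This identity is what collapses $\E[S_n^++M_n^+-L_n^+]$ to zero and makes the toll come out as cleanly as $\tfrac58 n+\tfrac{13}{16}-\tfrac{1}{16(n-\mathbbm{1}_{\{n\text{ even}\}})}$. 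Without an equivalent of this step your plan has no mechanism for handling the adaptive comparison rule, and the parity corrections you anticipate would not fall out; this is the missing idea you need to supply.
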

The derivation of formula (\ref{mean_swap}) uses a surprising combinatorial identity  for which we have not yet found an intuitive explanation, see Proposition \ref{satz:einhalb}.

\subsection{Deviations and correlations}
For the orders of the standard deviations and correlations we have the following asymptotic results which are independent of the special implementation, i.e., valid for our version Algorithm~1  as well as for the version in Aum\"uller et al. \cite{ADHKP16}.
\begin{thm} \label{satz:dev_comp} For the numbers of key comparisons $\mathcal{C}_n$ and swaps $\mathcal{S}_n$ of the dual-pivot quicksort ``Count" when sorting a random permutation of length $n$, as $n\to\infty$, we have
\begin{align*}
	\mathrm{Var}(\mathcal{C}_n) &\sim \sigma_c^2 n^2,\\
	\mathrm{Var}(\mathcal{S}_n) &\sim \sigma_s^2 n^2,\\
	\mathrm{Cov}(\mathcal{C}_n,\mathcal{S}_n) &\sim \sigma_{c,s}^2 n^2,
\end{align*}
where
\begin{align*}
	\sigma_c^2 &=\frac{1609}{300}-\frac{27}{50}\pi^2+\frac{3}{10}\log 2  = 0.241\ldots,\\
	\sigma_s^2 &=\frac{47}{48}-\frac{3}{32}\pi^2+\frac{3}{32}\log 2  = 0.118\ldots,\\
	\sigma_{c,s}^2&= \frac{43}{20}-\frac{9}{40}\pi^2+\frac{7}{40}\log 2=0.050\ldots.
\end{align*}
Hence, the asymptotic correlation between $\mathcal{C}_n$ and $\mathcal{S}_n$ is
\begin{align*}
	\mathrm{Corr}(\mathcal{C}_n,\mathcal{S}_n) \sim 0.298755\ldots \quad (n\to\infty).
\end{align*}
\end{thm}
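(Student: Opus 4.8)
The plan is to set up the standard divide-and-conquer recursion for the pair $(\mathcal{C}_n,\mathcal{S}_n)$ and extract second-order asymptotics via the Central Limit Theorem machinery for quicksort-type recurrences (the contraction method, cf. Rösler and Neininger--Rüschendorf). First I would fix the partitioning: conditioned on the two pivots being the $i$-th and $j$-th smallest ($1\le i<j\le n$), the three subproblems have (random) sizes summing to $n-2$, and by the uniform model they are again uniformly permuted and independent given the subarray sizes. Writing $D_n^{(c)}=\mathcal{C}_n-\mathbb{E}[\mathcal{C}_n]$ and $D_n^{(s)}=\mathcal{S}_n-\mathbb{E}[\mathcal{S}_n]$, one gets a bivariate recursion
\begin{equation*}
  \begin{pmatrix}D_n^{(c)}\\ D_n^{(s)}\end{pmatrix}
  \stackrel{d}{=}
  \sum_{r=1}^{3}\begin{pmatrix}D_{I_r^{(n)}}^{(c)}\\ D_{I_r^{(n)}}^{(s)}\end{pmatrix}
  +\begin{pmatrix}b_n^{(c)}\\ b_n^{(s)}\end{pmatrix},
\end{equation*}
with independent copies on the right, subgroup sizes $(I_1^{(n)},I_2^{(n)},I_3^{(n)})$ whose normalized version converges to the spacings of two i.i.d.\ uniforms on $[0,1]$, and toll terms $b_n^{(\cdot)}$ coming from the partitioning cost minus its conditional expectation. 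The key point is that both the comparison and the swap toll are linear in $n$ up to fluctuations of order $\sqrt{n}$ (for comparisons this is classical; for swaps one must check that the ``Count'' bookkeeping does not inflate the variance of the partitioning cost beyond $\mathrm{O}(n)$), so after scaling by $n$ the toll contributes nothing in the limit and only the recursive part survives.

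Next I would pass to the limit. Set $Y_n^{(c)}=D_n^{(c)}/n$, $Y_n^{(s)}=D_n^{(s)}/n$ and verify the contraction-method hypotheses in the Zolotarev metric $\zeta_2$: the mean corrections are $o(n)$-consistent, the toll terms rescaled by $1/n$ go to $0$ in $L^2$, and the size ratios $I_r^{(n)}/n$ converge in $L^2$ to $U_r$, the three spacings of $\{U,V\}$ with $U,V$ i.i.d.\ uniform. This yields joint convergence $(Y_n^{(c)},Y_n^{(s)})\to(Y^{(c)},Y^{(s)})$ where the limit is the unique centered, square-integrable fixed point of
\begin{equation*}
  \begin{pmatrix}Y^{(c)}\\ Y^{(s)}\end{pmatrix}
  \stackrel{d}{=}
  \sum_{r=1}^{3} U_r\begin{pmatrix}Y^{(c),r}\\ Y^{(s),r}\end{pmatrix}
  +\begin{pmatrix}g^{(c)}(U,V)\\ g^{(s)}(U,V)\end{pmatrix},
\end{equation*}
the $(Y^{(c),r},Y^{(s),r})$ being independent copies independent of $(U,V)$, and $g^{(c)},g^{(s)}$ the $L^2$-limits of the normalized tolls (the linear-term coefficients appearing in $\mathbb{E}[\mathcal{C}_n]$ and $\mathbb{E}[\mathcal{S}_n]$ enter here through the ``$a n$'' part of the quicksort transform, exactly as for classical quicksort). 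Existence of the limiting variances and covariance, i.e.\ $\mathrm{Var}(\mathcal{C}_n)\sim\sigma_c^2n^2$ etc., then follows together with convergence of the normalized second moments, so $\sigma_c^2=\mathrm{Var}(Y^{(c)})$, $\sigma_s^2=\mathrm{Var}(Y^{(s)})$, $\sigma_{c,s}^2=\mathrm{Cov}(Y^{(c)},Y^{(s)})$.

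The remaining work is the explicit evaluation of these three constants. Taking variances/covariances on the fixed-point equation gives, for each of the three quantities $v\in\{\mathrm{Var}(Y^{(c)}),\mathrm{Var}(Y^{(s)}),\mathrm{Cov}(Y^{(c)},Y^{(s)})\}$, a linear equation of the shape
\begin{equation*}
  v\;=\;3\,\mathbb{E}[U_1^2]\cdot v \;+\; (\text{quadratic forms in } g^{(c)},g^{(s)} \text{ and the } U_r),
\end{equation*}
because cross terms $U_rU_{r'}$ with $r\ne r'$ kill one of the independent centered copies. Since $\mathbb{E}[U_1^2]=2\cdot\frac{1}{4\cdot\,}\!$—more precisely $\mathbb{E}[U_r^2]=\frac{1}{6}$ for the Dirichlet$(1,1,1)$ spacings, so $3\mathbb{E}[U_1^2]=\frac12$—we get $v=2\times(\text{the inhomogeneous part})$, and the inhomogeneous part is a finite collection of integrals over the simplex of polynomials in $U,V$ times the toll functions, which are themselves polynomial/logarithmic (the logarithm-free structure of the tolls for ``Count'' means the integrals reduce to $\int_0^1 x^k\log x\,dx$-type values, producing the $\pi^2$ from $\sum 1/k^2$ and the $\log 2$ from the alternating harmonic contribution). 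Carrying out these integrals yields the closed forms for $\sigma_c^2,\sigma_s^2,\sigma_{c,s}^2$ stated, and $\mathrm{Corr}(\mathcal{C}_n,\mathcal{S}_n)\to\sigma_{c,s}^2/\sqrt{\sigma_c^2\sigma_s^2}=0.298755\ldots$ follows by substitution.

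I expect the main obstacle to be twofold. First, \emph{identifying the toll functions $g^{(c)},g^{(s)}$ correctly}: unlike single-pivot quicksort, the per-stage comparison and swap costs of ``Count'' depend in a non-obvious way on the relative order of small/large elements encountered during the scan (this is exactly the ``new combinatorial structure'' the introduction alludes to and what forces the $\mathcal{H}_n^{\mathrm{alt}}$ terms in the means), and one must compute not just their conditional expectations but their $L^2$-limits after centering—this requires a careful second-moment analysis of the partitioning procedure, where the surprising identity of Proposition~2.2 is likely to reappear. Second, \emph{the bookkeeping of the simplex integrals}: there are on the order of a dozen distinct integrals (squares of each toll, their product, and their interactions with each $U_r$), and assembling them into the three rational-plus-$\pi^2$-plus-$\log 2$ constants without arithmetic slips is the genuinely laborious part—though conceptually routine once $g^{(c)},g^{(s)}$ are in hand.
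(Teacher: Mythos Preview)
Your overall framework matches the paper's: center and scale by $n$, set up the bivariate distributional recursion, apply the contraction method in $L_2$, and read off the covariance matrix from the fixed-point equation using $3\,\E[D_1^2]=\tfrac12$. The structure is right, but there is a genuine gap precisely at the step you yourself flag as the main obstacle, and one internal inconsistency along the way.

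\textbf{Minor inconsistency.} You write that ``the toll terms rescaled by $1/n$ go to $0$ in $L^2$'', yet your fixed-point equation carries nonzero drift terms $g^{(c)},g^{(s)}$. The latter is correct: the scaled toll $b^{(n)}$ converges in $L_2$ to a \emph{nondegenerate} random vector depending on the spacings, and this is exactly what produces the nontrivial constants.

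\textbf{The actual gap.} To evaluate $\sigma_c^2,\sigma_s^2,\sigma_{c,s}^2$ you must know $g^{(c)},g^{(s)}$ explicitly. The partitioning costs $T_C(n),T_S(n)$ involve the counts $S_n^+,M_n^+,L_n^+$ of small/medium/large elements first compared to the larger pivot, and the substance of the problem is their $L_2$-limit after scaling by $n$. The $\tfrac12$-identity you point to (Proposition~3.1) yields first moments only and does not settle this. The paper's device is different: conditionally on $(D_1,D_2,D_3)$, the running difference $W_i=L_i-S_i$ is a biased simple random walk on $\Z$, which by the strong law of large numbers drifts to $+\infty$ or $-\infty$ according to the sign of $D_3-D_1$. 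Hence after a finite random time all comparisons go to one pivot first, and dominated convergence gives
\[
\tfrac{1}{n}\big(S_n^+,M_n^+,L_n^+\big)\ \stackrel{L_2}{\longrightarrow}\ \mathbbm{1}_{\{D_3>D_1\}}\,(D_1,D_2,D_3).
\]
This is what produces the explicit tolls $b_1=1+D_2+\min\{D_1,D_3\}+\tfrac{9}{5}\sum_r D_r\log D_r$ and $b_2=D_1+D_3+\mathbbm{1}_{\{D_3>D_1\}}\bigl(\tfrac12 D_1+D_2-D_3\bigr)+\tfrac34\sum_r D_r\log D_r$. The $\log 2$ in the final constants then comes from integrating the $\min$/indicator pieces against $D_r\log D_r$ over the simplex, not from the alternating harmonic numbers (those are a feature of the exact finite-$n$ mean and do not survive to the continuous limit). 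Once $(b_1,b_2)$ is in hand, your last step---solving $v=\tfrac12 v+\text{(inhomogeneous part)}$ and evaluating the resulting Dirichlet integrals---is exactly what remains, and is routine.
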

Note that for the classic quicksort (with one pivot element and the partitioning method of Hoare and Sedgewick, see, e.g., Wild \cite[Algorithm~2]{WILD}) we have a strong negative correlation of about $-0.864$ between the numbers of key comparisons and key exchanges, see \cite{Neininger}. The different nature of correlations for the classic quicksort and the dual-pivot quicksort ``Count'' is already transparent in the scatter plot in Figure 1.
\begin{figure}[ht]\begin{center}  \label{fig_1}
	\includegraphics[width=14cm]{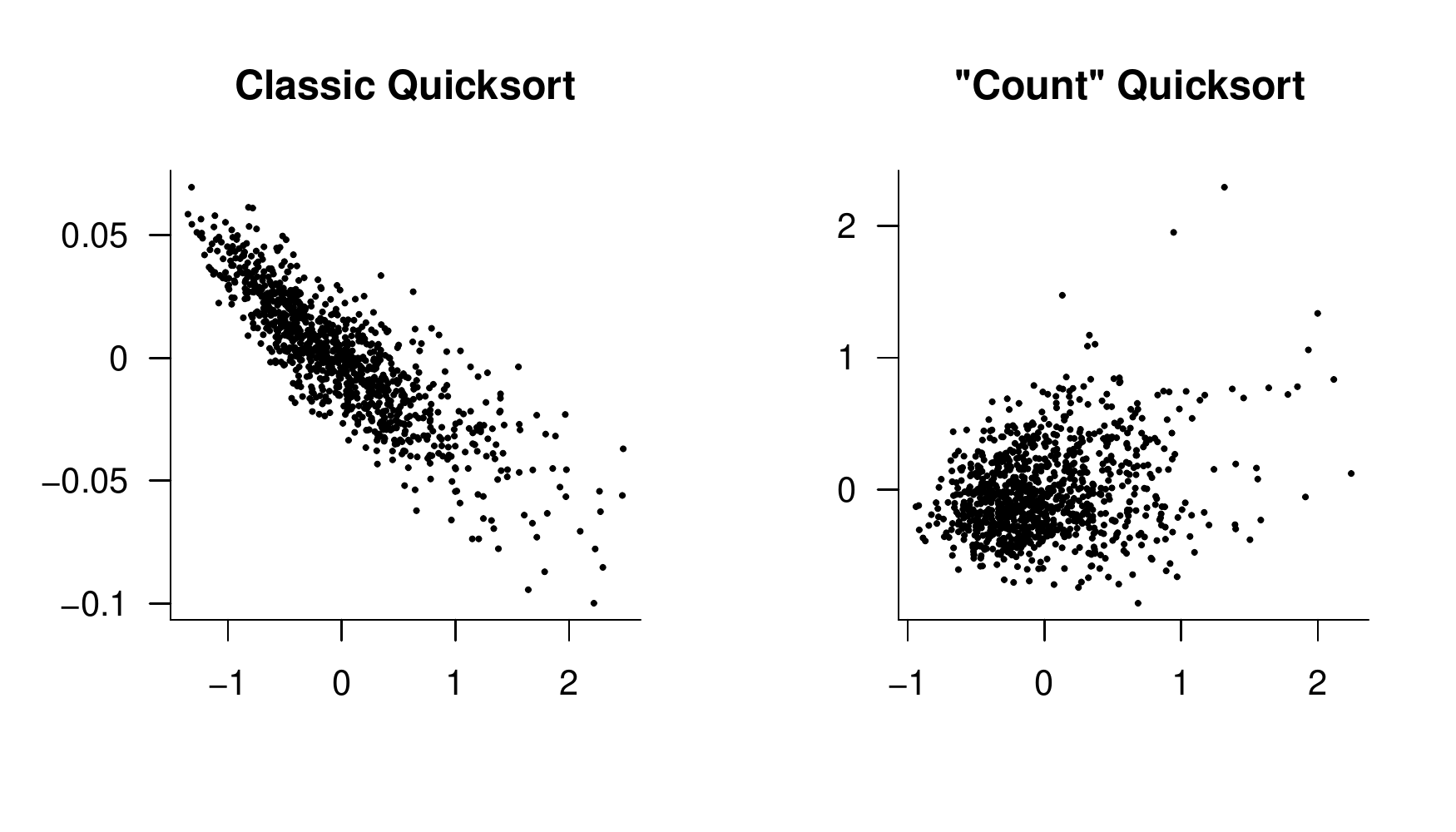}\end{center} \vspace{-1cm} 
	\caption{Scatter plots of the normalized versions of $\mathcal{C}_n$ and $\mathcal{S}_n$ as in Theorem \ref{limit_law} for the classic quicksort (left) and the dual-pivot quicksort ``Count'' (right). Depicted are $1000$ samples of random permutations of size $n=10000$.}
\end{figure}

\subsection{Limit distributions}
The asymptotic behaviour of the distributions of $\mathcal{C}_n$ and $\mathcal{S}_n$ is determined by the recursive distributional equation (RDE) for probability distributions on $\R^2$,
\begin{align}\label{bi-fpe}
	X \stackrel{d}{=} \sum_{r=1}^{3}{\left(\begin{matrix}D_r & 0 \\ 0 & D_r \end{matrix}\right) X^{(r)}}+\left(\!\! \begin{array}{c} b_1\\b_2\end{array}\!\!  \right),
\end{align}
with
\begin{align*}
	\left(\!\!  \begin{array}{c} b_1\\b_2\end{array}\!\!  \right)=\left( \!\! \begin{array}{c} 1+D_2+\min\left\{D_1,D_3\right\}+\frac{9}{5}\sum_{r=1}^{3}{D_r\log D_r}\\ D_1+D_3+\mathbbm{1}_{\left\{D_3>D_1\right\}}(\frac{1}{2}D_1+D_2-D_3)
	+\frac{3}{4}\sum_{r=1}^{3}{D_r\log D_r} \end{array}\!\!  \right),
\end{align*}
where $X^{(1)}$, $X^{(2)}$, $X^{(3)}$ and $(D_1,D_2,D_3)$ are independent, $X^{(r)}$ is distributed as $X$ for $r=1,2,3$ and $(D_1,D_2,D_3)$ is distributed as the spacings
\begin{align}\label{spacings}
	\left(\min\{U_1,U_2\},\max\{U_1,U_2\}-\min\{U_1,U_2\},
	1-\max\{U_1,U_2\}\right)
\end{align}
of two independent, unif$[0,1]$ distributed random variables $U_1$ and $U_2$ in $[0,1]$. Among all centered distributions with a finite second moment RDE (\ref{bi-fpe}) has a unique solution, see \cite[Lemma 3.1]{Neininger}. We denote this solution by
\begin{align}\label{def_bi_limit}
	{\cal L}(X)={\cal L}\left(\!\!  \begin{array}{c} X_c\\X_s\end{array}\!\!  \right).
\end{align}
Note that the first and second coordinate of RDE (\ref{bi-fpe}) give univariate RDEs which also characterize the marginals ${\cal L}(X_c)$ and ${\cal L}(X_s)$, respectively.
\begin{thm}\label{limit_law}
	For the normalized numbers of key comparisons and swaps of the dual-pivot quicksort ``Count'' when sorting a random permutation of length $n$ we have jointly in distribution
\begin{align*}
	\left(\frac{\mathcal{C}_n-\E[\mathcal{C}_n]}{n}, \frac{\mathcal{S}_n-\E[\mathcal{S}_n]}{n}\right)\stackrel{d}{\longrightarrow}  (X_c,X_s),\quad (n\to\infty),
\end{align*}
where $(X_c,X_s)$ is given in (\ref{def_bi_limit}).
\end{thm}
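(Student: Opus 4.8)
The plan is to apply the contraction method for multivariate recursive distributional equations, following the framework in \cite{Neininger} (to which the statement of the limiting RDE (\ref{bi-fpe}) and its uniqueness already point us). First I would set up the exact distributional recurrence for the pair $(\mathcal{C}_n,\mathcal{S}_n)$: conditioning on the ranks of the two pivots $p<q$, the array splits into three subarrays of (random) sizes $(I_1^{(n)},I_2^{(n)},I_3^{(n)})$ summing to $n-2$, and one gets
\begin{align*}
\left(\!\!\begin{array}{c}\mathcal{C}_n\\\mathcal{S}_n\end{array}\!\!\right)\stackrel{d}{=}\sum_{r=1}^{3}\left(\!\!\begin{array}{cc}1&0\\0&1\end{array}\!\!\right)\left(\!\!\begin{array}{c}\mathcal{C}^{(r)}_{I_r^{(n)}}\\\mathcal{S}^{(r)}_{I_r^{(n)}}\end{array}\!\!\right)+\left(\!\!\begin{array}{c}R^c_n\\R^s_n\end{array}\!\!\right),
\end{align*}
where the three families $(\mathcal{C}^{(r)},\mathcal{S}^{(r)})$ are independent copies of $(\mathcal{C},\mathcal{S})$, independent of the pivot ranks, and $(R^c_n,R^s_n)$ is the (random) toll incurred in the top-level partitioning step. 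The toll for comparisons is the number of key comparisons in partitioning, whose conditional expectation given the split is essentially $n-2$ plus the min-term coming from the ``Count'' rule; the toll for swaps is the number of swaps in the top-level partitioning, which given the split has conditional expectation of the form $I_1^{(n)}+I_3^{(n)}+\mathbbm{1}_{\{I_3^{(n)}>I_1^{(n)}\}}(\tfrac12 I_1^{(n)}+I_2^{(n)}-I_3^{(n)})$ up to lower-order fluctuations — these are exactly the shapes appearing in $b_1,b_2$ of (\ref{bi-fpe}).

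Next I would introduce the normalization $Y_n:=\big((\mathcal{C}_n-\E[\mathcal{C}_n])/n,(\mathcal{S}_n-\E[\mathcal{S}_n])/n\big)$ and rewrite the recurrence for $Y_n$. Using $I_r^{(n)}/n\to D_r$ almost surely, where $(D_1,D_2,D_3)$ is the spacings vector (\ref{spacings}), the coefficient matrices $\mathrm{diag}(I_r^{(n)}/n,I_r^{(n)}/n)$ converge to $\mathrm{diag}(D_r,D_r)$, and the normalized toll $(R^c_n-\E[\mathcal{C}_n]+\sum_r\E[\mathcal{C}_{I_r^{(n)}}\mid \text{split}])/n$ (and likewise for swaps) converges to $(b_1,b_2)$. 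Here the explicit mean expansions $\E[\mathcal{C}_n]=\tfrac95 n\log n+A_cn+O(\log n)$ from (\ref{exp_mean_cn}) and $\E[\mathcal{S}_n]=\tfrac34 n\log n+A_s n+O(\log n)$ from Theorem \ref{satz:swape} are exactly what produce the $\tfrac95\sum D_r\log D_r$ and $\tfrac34\sum D_r\log D_r$ terms in $b_1,b_2$; the linear-in-$n$ and $\log n$ parts cancel against the toll and the subproblem means in the limit. I would verify the $L^2$-convergence of the coefficients and the toll, and the integrability/contraction conditions: the map on $\mathcal{M}_2^{0}(\R^2)$ (centered, finite second moment) induced by (\ref{bi-fpe}) is a strict contraction in the minimal $L^2$-Wasserstein metric $\ell_2$ because $\sum_r\E[D_r^2]=\E[D_1^2]+\E[D_2^2]+\E[D_3^2]<1$ (each spacing is Beta$(1,2)$-like with second moment $1/10$, summing to $3/10<1$), which is precisely \cite[Lemma 3.1]{Neininger}. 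By the general transfer theorem of the contraction method (e.g. \cite[Theorem 4.1]{Neininger}), once the toll converges in $L^2$, the coefficients converge in $L^2$, and a negligibility condition on small subproblems holds, $Y_n\to(X_c,X_s)$ in $\ell_2$, hence in distribution, jointly.

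The main obstacle is the precise analysis of the partitioning toll for swaps, $R^s_n$, and showing it has the right $L^2$-limit: unlike comparisons, the number of swaps in one partitioning pass of ``Count'' depends in a somewhat delicate way on the interleaving of small/medium/large elements and on the running ``Count'' counter, so one has to show that, conditionally on the split sizes, $R^s_n/n$ concentrates around the claimed deterministic function of $(D_1,D_2,D_3)$ with an $L^2$-error that is $o(1)$ uniformly enough to pass to the limit. This is a finite-horizon (single-pass) computation but requires care with the indicator $\mathbbm{1}_{\{D_3>D_1\}}$ term and with the boundary cases $n$ even/odd that already show up in the exact mean formula (\ref{mean_swap}); the $O(1)$ and oscillating terms there are harmless in the $n$-normalization. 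The joint convergence then needs no extra work beyond running the contraction argument directly in $\R^2$ with the common scalar coefficients $D_r$, exactly as the bivariate RDE (\ref{bi-fpe}) is posed; the marginals $X_c$, $X_s$ are the unique fixed points of the first and second coordinates as noted after (\ref{def_bi_limit}).
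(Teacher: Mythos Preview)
Your plan is essentially the paper's own argument: set up the bivariate distributional recurrence, normalize, use the mean expansions (\ref{exp_mean_cn}) and Theorem~\ref{satz:swape} to produce the $\tfrac{9}{5}\sum D_r\log D_r$ and $\tfrac{3}{4}\sum D_r\log D_r$ terms, check the $L_2$ contraction condition, and invoke the general transfer theorem (the paper cites \cite{WNN15} for the dual-pivot template rather than \cite{Neininger} directly, but the content is the same).

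The one place where you stop short is exactly where the paper's only real work lies. You correctly flag the $L_2$-convergence of the swap toll as the ``main obstacle'' and say it is a single-pass computation ``requiring care with the indicator $\mathbbm{1}_{\{D_3>D_1\}}$,'' but you do not say how to get it. The paper's device is to decompose the partitioning tolls explicitly via the counts $S_n^+,M_n^+,L_n^+$ of small/medium/large elements that get compared to $q$ first, so that
\[
T_C(n)=n-1+I_2^{(n)}+I_3^{(n)}+S_n^+-L_n^+,\qquad
T_S(n)=2+I_1^{(n)}+I_3^{(n)}+\tfrac12 S_n^++M_n^+-L_n^+,
\]
and then to prove $\tfrac1n(S_n^+,M_n^+,L_n^+)\to \mathbbm{1}_{\{D_3>D_1\}}(D_1,D_2,D_3)$ in $L_2$ (Lemma~\ref{lemma:SnMnLn}). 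That lemma is proved by observing that, conditionally on $(D_1,D_2,D_3)$, the running difference $W_i=L_i-S_i$ is a biased simple random walk with drift $D_3-D_1$, so by the SLLN it eventually stays on one side of $0$ a.s.; from some random index on, every element is compared to the same pivot first, and dominated convergence upgrades a.s.\ to $L_2$. This random-walk argument is the concrete idea that fills the gap you identified; once it is in place, your outline and the paper's proof coincide.
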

From the form of (\ref{bi-fpe}) we obtain for the limits $X_c$ and $X_s$ the existence of densities.
\begin{coro}\label{coro_dens}
The distributions of $X_c$ and $X_s$ have densities each being infinitely smooth and rapidly decreasing (together with all their higher derivatives).
\end{coro}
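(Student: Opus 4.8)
The plan is to pass to characteristic functions and to adapt the Fourier-analytic bootstrap that Fill and Janson developed for the density of the classical Quicksort limit. Since the first and the second coordinate of RDE~(\ref{bi-fpe}) are themselves univariate RDEs characterising $\mathcal L(X_c)$ and $\mathcal L(X_s)$, it is enough to treat one fixed point $X$ of an equation of the form
\begin{align*}
X \stackrel{d}{=} D_1X^{(1)}+D_2X^{(2)}+D_3X^{(3)}+b,
\end{align*}
where $(D_1,D_2,D_3)$ are the spacings in~(\ref{spacings}), the $X^{(r)}$ are i.i.d.\ copies of $X$, independent of $(D_1,D_2,D_3)$, and $b=b(D_1,D_2,D_3)$ is the associated toll term read off from~(\ref{bi-fpe}). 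Because $\sum_r D_r=1$, $D_r\in[0,1]$ and $x\mapsto x\log x$ is bounded on $[0,1]$, the random variable $b$ is bounded.

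First I would note that $X$ has finite moments of every order. This is the standard by-product of the contraction-method argument behind \cite[Lemma~3.1]{Neininger}: it can be run in the Zolotarev metric $\zeta_s$ (or the minimal $\ell_s$-metric) for every $s\geq2$, since $b$ is bounded and, by convexity and $\sum_rD_r=1$, $\E\big[\,|\sum_rD_rX^{(r)}|^s\,\big]\leq\E|X|^s$; hence the fixed point lies in each $\ell_s$-space. In particular $\phi(t):=\E[e^{itX}]$ is infinitely differentiable with $\phi^{(j)}(t)=i^j\E[X^je^{itX}]$.

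Conditioning on $(D_1,D_2,D_3)$ in the RDE yields the functional equation
\begin{align*}
\phi(t)=\E\Big[e^{itb(D_1,D_2,D_3)}\prod_{r=1}^3\phi(D_rt)\Big],\qquad t\in\R,
\end{align*}
hence $|\phi(t)|\leq\E\big[\prod_{r=1}^3\min\{1,|\phi(D_rt)|\}\big]$. The heart of the argument is to deduce from this that $\phi$, and then each $\phi^{(j)}$, decays faster than any power of $|t|$ as $|t|\to\infty$. Here I would follow Fill and Janson: one first checks $\phi(t)\to0$ as $|t|\to\infty$ ($X$ is non-lattice, e.g.\ because $b$ is non-degenerate), and then bootstraps the decay exponent by splitting the expectation according to the size of the smallest of $D_1,D_2,D_3$; bounding the factor attached to that smallest spacing by $1$ reduces the estimate to the two-factor situation that is the core of the Fill--Janson analysis, and the three-branch form is if anything helpful since at least two spacings always sum to $1-\min_rD_r\geq\tfrac23$. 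Keeping track of constants, a polynomial decay rate of order $p$ is upgraded to one of strictly larger order, which iterates to super-polynomial decay. Differentiating the functional equation and running the same scheme simultaneously for $\phi,\phi',\dots,\phi^{(j)}$ --- the right-hand side then involves only lower-order derivatives with bounded coefficients coming from powers of $b$ and of the $D_r$ --- gives the same super-polynomial decay for every $\phi^{(j)}$. I expect this bootstrap, and in particular controlling the region where two or three spacings are simultaneously small while keeping the constants uniform along the iteration, to be the only genuinely technical point.

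The conclusions then follow by Fourier inversion. Super-polynomial decay of $\phi$ gives $(1+|t|)^k\phi\in L^1(\R)$ for every $k$, so $X$ has density $f(x)=\tfrac1{2\pi}\int_{\R}e^{-itx}\phi(t)\,dt$, which is $C^\infty$ with $f^{(k)}(x)=\tfrac1{2\pi}\int_{\R}(-it)^ke^{-itx}\phi(t)\,dt$ bounded and continuous. Writing $\psi_k(t):=(-it)^k\phi(t)$, Leibniz's rule shows that $\psi_k^{(m)}$ is a finite linear combination of terms $t^a\phi^{(b)}(t)$ with $b\leq m$, each lying in $L^1(\R)$ by the super-polynomial decay of the $\phi^{(b)}$; integrating by parts $m$ times in the formula for $f^{(k)}$ (boundary terms vanish by that decay) gives $x^mf^{(k)}(x)=\tfrac{(-i)^m}{2\pi}\int_{\R}e^{-itx}\psi_k^{(m)}(t)\,dt$, which is bounded in $x$. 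Hence $f^{(k)}(x)=\mathrm{O}(|x|^{-m})$ for all $k,m$, i.e.\ $f$ is infinitely smooth and rapidly decreasing together with all its derivatives; applied to the two marginals this proves the claim for the densities of $X_c$ and $X_s$.
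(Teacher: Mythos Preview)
The paper does not give a proof at all: it simply records that the corollary ``follows from a general theorem of Leckey \cite{Leckey} based on techniques of Fill and Janson \cite{FiJa00}.'' Your proposal is to carry out the Fill--Janson Fourier bootstrap directly for the two marginal RDEs, which is precisely the machinery underlying Leckey's general result; so the approach is the same, only you are sketching the argument rather than citing the packaged theorem.

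One technical point in your sketch deserves correction. You write that ``$\phi(t)\to0$ as $|t|\to\infty$ ($X$ is non-lattice, e.g.\ because $b$ is non-degenerate)''. Being non-lattice only gives $|\phi(t)|<1$ for $t\neq0$; it does \emph{not} imply $\phi(t)\to0$ (think of a singular continuous distribution). What the Fill--Janson bootstrap actually needs as its seed is the weaker statement $\sup_{|t|\ge t_0}|\phi(t)|\le\lambda<1$ for some $t_0>0$, and this is obtained not from general principles but by feeding the pointwise bound $|\phi(t)|<1$ back into the functional equation $|\phi(t)|\le\E\big[\prod_r|\phi(D_rt)|\big]$ and using that the $D_r$ have a joint density. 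Once that seed is in place your iteration to super-polynomial decay, and the subsequent Fourier-inversion argument for smoothness and rapid decrease of $f$ and its derivatives, are correct.
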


\section{Analysis}\label{sec:ana}
In this section, we sketch the analysis leading to our results. Recall that we assume that the array is uniformly distributed. For simplicity, we assume that the input consists of a sequence $(U_i)_{i\ge 1}$ of independent, uniformly on $[0,1]$ distributed random variables and that, for fixed $n$, we have to sort the array $[U_1,\ldots,U_n]$. To further simplify the analysis we instead of choosing $U_1$ and $U_n$ as pivot elements, cf.~Algorithm~1, choose $U_1$ and $U_2$ as pivot elements. (This does not change the distributions of our quantities but implies that various convergences also hold in a strong (almost sure, $L_2$) sense.)  

We start with the growth of the numbers of elements smaller than $p$, between $p$ and $q$, and larger than $q$ during the partitioning stage. We denote by $(S_i,M_i,L_i)$ the vector of these numbers after $i$ elements have been compared to the pivot elements, $i=0,\ldots,n-2$. We first analyse an urn model which captures the growth dynamics of $(S_i,M_i,L_i)_{i=0,\ldots,n-2}$.\\

\noindent
{\bf A related urn model.} Consider a P\'olya--Eggenberger urn model with three types $\frak{s}$, $\frak{m}$ and $\frak{l}$ of balls (corresponding to ``small'', ``medium'' and ``large''  elements). Initially, the urn contains one ball of each type. In each step a ball is drawn uniformly from the urn (and independently from the earlier draws) and returned to the urn together with one ball of the same type, i.e., the replacement matrix is the $3\times 3$ identity matrix. Denote by $(S'_i,M'_i,L'_i)$ the vector of numbers of balls of each type added during steps $j=1,\ldots,i$. In other words, the composition of the urn after $i$ steps consists of $S'_i+1$ elements of type $\frak{s}$, of $M'_i+1$ elements of type $\frak{m}$, and of $L'_i+1$ elements of type $\frak{l}$. It is easy to see and has been used earlier, see \cite{ADHKP16} and \cite{WILD}, that for each $n\ge 3$ the processes $(S'_i,M'_i,L'_i)_{i=0,\ldots,n-2}$ and $(S_i,M_i,L_i)_{i=0,\ldots,n-2}$ are equal in distribution. Since we are only analysing parameters which are identified by distributions we hence drop the primes and shortly write $(S_i,M_i,L_i)$ for $(S'_i,M'_i,L'_i)$. We need the following conditional probability of adding a ball of type $\frak{l}$ while there are more balls of type $\frak{l}$ than of type $\frak{s}$.
\begin{prop}\label{satz:einhalb} For all $i\ge 1$ we have
\begin{align*}
	\Prob(L_{i+1}=L_i+1\,|\,L_i>S_i)=\frac{1}{2}.
\end{align*}
\end{prop}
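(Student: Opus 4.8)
The plan is to translate the statement about the urn dynamics into a purely combinatorial identity about uniformly random compositions, and then to verify that identity by a direct computation.

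First I would condition on the composition of the urn after $i$ steps. If $(S_i,M_i,L_i)=(a,b,c)$, then the urn contains $a+1$, $b+1$ and $c+1$ balls of types $\frak{s}$, $\frak{m}$, $\frak{l}$ respectively, so $i+3$ balls in total, and the next drawn ball is of type $\frak{l}$ with probability $(c+1)/(i+3)$. Hence
\[
  \Prob\bigl(L_{i+1}=L_i+1 \,\big|\, L_i>S_i\bigr)
  \;=\; \E\!\left[\frac{L_i+1}{i+3}\;\middle|\;L_i>S_i\right],
\]
and the assertion is equivalent to $\E[L_i\mid L_i>S_i]=(i+1)/2$.

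Next I would use the classical fact that for the present urn (replacement matrix the identity, one ball of each type initially) the vector $(S_i,M_i,L_i)$ is uniformly distributed over the discrete simplex $\Delta_i:=\{(a,b,c)\in\Z_{\ge 0}^3:a+b+c=i\}$. This follows at once from the observation that the probability of any fixed sequence of $i$ draws with type-counts $(a,b,c)$ equals $2\,a!\,b!\,c!/(i+2)!$, which depends on the sequence only through $(a,b,c)$; multiplying by the multinomial coefficient $i!/(a!\,b!\,c!)$ gives $\Prob((S_i,M_i,L_i)=(a,b,c))=2/((i+1)(i+2))=1/|\Delta_i|$. Since the law is uniform, $\E[L_i\mid L_i>S_i]=(i+1)/2$ becomes the identity
\[
  \sum_{\substack{(a,b,c)\in\Delta_i\\ c>a}} c \;=\; \frac{i+1}{2}\;\bigl|\{(a,b,c)\in\Delta_i : c>a\}\bigr| .
\]

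To prove this identity I would sum over the middle coordinate $b$: for fixed $b$ the admissible pairs satisfy $a+c=i-b=:k$ and $c>a$, i.e.\ $a\in\{0,1,\dots,\lceil k/2\rceil-1\}$, so both the inner sum $\sum_a(k-a)$ and the inner count $\lceil k/2\rceil$ are elementary arithmetic-progression expressions whose closed forms depend only on the parity of $k$; summing over $k=0,1,\dots,i$ and distinguishing the parity of $i$ then gives the claim. Concretely, for $i=2p$ one obtains $\sum_{c>a}c=p(p+1)(2p+1)/2$ and $|\{c>a\}|=p(p+1)$, while for $i=2p+1$ one obtains $(p+1)^3$ and $(p+1)^2$; in both cases the ratio equals $(i+1)/2$. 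I expect this last computation to be the only real obstacle: it is entirely elementary but admits no apparent conceptual shortcut --- consistent with the paper's remark that the identity lacks an intuitive explanation --- so the work consists solely of careful bookkeeping of the two parities.
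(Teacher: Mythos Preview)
Your proposal is correct and follows essentially the same route as the paper: both use that $(S_i,M_i,L_i)$ is uniform on the simplex and then reduce to a direct parity-dependent summation. The only cosmetic difference is that you recast the claim as $\E[L_i\mid L_i>S_i]=(i+1)/2$ and organize the double sum by the value of $a+c$, whereas the paper computes $\Prob(L_i>S_i)$ via symmetry and $\Prob(L_{i+1}=L_i+1,\,L_i>S_i)$ by summing over $\ell$ then $s$ before taking the ratio; the underlying arithmetic is the same.
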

\begin{proof}
It is easy to see that for each $i\ge 0$ the vector $(S_i,M_i,L_i)$ is uniformly distributed over its possible values, i.e., for all $s, m, \ell\ge 0$ with $s+m+\ell=i$, we have
\begin{align}\label{comb_1}
	\Prob((S_i, M_i, L_i)=(s,m,\ell)) = \frac{2}{(i+1)(i+2)}.
\end{align}
This implies that $\Prob(L_i=S_i)$ is $1/(i+1)$ if $i$ is even and $1/(i+2)$ if $i$ is odd. Hence, by symmetry, we obtain
\begin{align}\label{comb_2}
	\Prob(L_i>S_i) = \begin{cases} \frac{i}{2(i+1)},\quad &\text{ if $i$ is even,} \\ \frac{i+1}{2(i+2)},\quad &\text{ if $i$ is odd.}\end{cases}
\end{align}
On the other hand, by the law of total probability and (\ref{comb_1}), we have
\begin{align}
	\lefteqn{\Prob(L_{i+1}=L_i+1, L_i>S_i)}\nonumber\\
	&= \sum_{\ell=1}^{i}\sum_{\substack{s=0\\s\leq \ell-1}}^{i-\ell}{\mathbb{P}(S_i=s, M_i=i-s-\ell, L_i=\ell)\ \mathbb{P}(L_{i+1}=L_i+1\, |\, S_i=s, M_i=i-s-\ell, L_i=\ell)}\nonumber \\
	& = \sum_{\ell=1}^{i}\sum_{\substack{s=0\\s\leq \ell-1}}^{i-\ell}{\frac{2}{(i+1)(i+2)}\cdot\frac{\ell+1}{i+3}}\nonumber \\
	& = \begin{cases} \frac{i}{4(i+1)} ,\quad &\text{ if $i$ is even,} \\
                  \frac{i+1}{4(i+2)} ,\quad &\text{ if $i$ is odd.}\end{cases}\label{comb_3}
\end{align}
Combining (\ref{comb_2}) and (\ref{comb_3}) implies the assertion.
\end{proof}

\noindent {\bf Expected values.}
To study the expectation of the number of swaps $\mathcal{S}_n$ denote, for $n\ge 2$, by $S_n^+$, $M_n^+$ and $L_n^+$  the number of small, medium and large elements compared to $q$ first when partitioning an input sequence of length $n$ with the dual-pivot quicksort ``Count". Formally, we have
\begin{align*}
	S_n^+ &= \sum_{i=1}^{n-3}{\mathbbm{1}_{\left\{L_i>S_i,\ S_{i+1}=S_i+1\right\}}}, \\
	M_n^+ &= \sum_{i=1}^{n-3}{\mathbbm{1}_{\left\{L_i>S_i,\ M_{i+1}=M_i+1\right\}}},\\
	L_n^+ &= \sum_{i=1}^{n-3}{\mathbbm{1}_{\left\{L_i>S_i,\ L_{i+1}=L_i+1\right\}}}.
\end{align*}
Furthermore, we denote by $I^{(n)}=(I^{(n)}_1,I^{(n)}_2,I^{(n)}_3)$ the sizes of the three sublists generated in the first partitioning stage. Note, that we have $\frac{1}{n}I^{(n)}\to (D_1,D_2,D_3)$ in $L_p$ for any $1\le p<\infty$ as $n\to\infty$ (the $U_1$ and $U_2$ in (\ref{spacings}) are now the pivot elements).

The number $T_S(n)$ of swaps executed by the dual-pivot ``Count'' during the first partitioning stage is composed by
\begin{itemize}
\item[$\triangleright$] $I_1^{(n)}+\frac{1}{2}S_n^+$ swaps for the small elements (since there is a swap for each small element compared to $p$ first and a \emph{rotate3}-operation, i.e., $\nicefrac{3}{2}$ swaps, for each small element compared to $q$ first),
\item[$\triangleright$] $I_3^{(n)}-L_n^+$  swaps for the large elements compared to $p$ first,
\item[$\triangleright$] $M_n^+$ swaps for the medium elements compared to $q$ first, and
\item[$\triangleright$] two swaps at the end in order to bring the pivots to their final positions (both, line 33 and line 34 of Algorithm~1 need two write-accesses to the array).
\end{itemize}
Hence, we have
\begin{align}
	T_S(n)=2+I_1^{(n)}+I_3^{(n)}+\frac{1}{2}S_n^++ M_n^+ -L_n^+. \label{tolls}
\end{align}
We first derive the expectation of $T_S(n)$, see (\ref{tsn}).
\begin{lem} For all $n \geq 2$ we have
\begin{align*}
	\mathbb{E}[S_n^+] = \mathbb{E}[L_n^+-M_n^+]= \frac{1}{12}n-\frac{7}{24}+\frac{1}{8(n-\mathbbm{1}_{\left\{n\;\mathrm{ even}\right\}})}.
\end{align*}
\label{lemma:spluse}
\end{lem}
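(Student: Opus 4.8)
The plan is to handle the two asserted equalities separately. The identity $\E[S_n^+]=\E[L_n^+-M_n^+]$ comes essentially for free from Proposition~\ref{satz:einhalb}. In each partitioning step exactly one of the three events $\{S_{i+1}=S_i+1\}$, $\{M_{i+1}=M_i+1\}$, $\{L_{i+1}=L_i+1\}$ occurs, so adding the three indicator sums yields the pathwise identity $S_n^++M_n^++L_n^+=\sum_{i=1}^{n-3}\mathbbm{1}_{\{L_i>S_i\}}$. Taking expectations and using $\Prob(L_{i+1}=L_i+1\mid L_i>S_i)=\tfrac12$ for every $i\ge1$ (Proposition~\ref{satz:einhalb}), the right-hand side equals $2\,\E[L_n^+]$, whence $\E[S_n^+]+\E[M_n^+]=\E[L_n^+]$, which is the first equality. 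For $n\in\{2,3\}$ all sums are empty and the claimed formula reads $0$, which checks out.

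For the explicit value I would compute $\E[S_n^+]=\sum_{i=1}^{n-3}\Prob(L_i>S_i,\,S_{i+1}=S_i+1)$ by conditioning on $(S_i,M_i,L_i)$. By (\ref{comb_1}) this vector is uniform on its admissible values, and the urn dynamics give $\Prob(S_{i+1}=S_i+1\mid(S_i,M_i,L_i)=(s,m,\ell))=(s+1)/(i+3)$, exactly as in the derivation of (\ref{comb_3}) but with $s+1$ in place of $\ell+1$. Summing $s+1$ over the relevant range gives
\[ \E[S_n^+]=\sum_{i=1}^{n-3}\frac{2}{(i+1)(i+2)(i+3)}\,A_i,\qquad A_i:=\sum_{\ell=1}^{i}\;\sum_{s=0}^{\min\{\ell-1,\,i-\ell\}}(s+1). \]
To evaluate $A_i$ I would use $\sum_{s=0}^{k}(s+1)=\binom{k+2}{2}$ and split the sum over $\ell$ at $\ell\approx i/2$ according to which of $\ell-1$, $i-\ell$ is the smaller; after reindexing by $k=\ell-1$ in the lower block and $k=i-\ell$ in the upper block, both blocks become sums of the form $\sum_{k}\binom{k+2}{2}$, which collapse via the hockey-stick identity. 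This yields $A_{2t}=2\binom{t+2}{3}=\tfrac13 t(t+1)(t+2)$ and $A_{2t+1}=\binom{t+3}{3}+\binom{t+2}{3}=\tfrac16(t+1)(t+2)(2t+3)$.

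Plugging these back in, the $i$-th summand $a_i:=2A_i/\big((i+1)(i+2)(i+3)\big)$ simplifies strikingly: $a_{2t+1}=\tfrac1{12}$ identically in $t$, while a single partial-fraction step gives $a_{2t}=\tfrac1{12}-\tfrac18\big(\tfrac1{2t+1}-\tfrac1{2t+3}\big)$. Summing over $i=1,\dots,n-3$, the constant parts contribute $(n-3)/12$ and the corrections over the even indices telescope to $\tfrac13-\tfrac1{n-1}$ if $n$ is even (largest even index $n-4$) and to $\tfrac13-\tfrac1{n}$ if $n$ is odd (largest even index $n-3$). Collecting terms turns $(n-3)/12-\tfrac1{24}$ into $\tfrac1{12}n-\tfrac7{24}$ and leaves the boundary term $\tfrac1{8(n-1)}$ resp.\ $\tfrac1{8n}$, i.e.\ $\tfrac1{8(n-\mathbbm{1}_{\{n\ \mathrm{even}\}})}$, which is the asserted formula.

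The one genuinely delicate point is the evaluation of $A_i$: one must keep careful track of the floor hidden in $\min\{\ell-1,i-\ell\}$ and of the parity of $i$ when folding the two blocks together, since an off-by-one there contaminates the final constants $-\tfrac7{24}$ and the $\tfrac18$ in the boundary term. Everything else — the partial-fraction decomposition and the telescoping — is routine bookkeeping, and the exact cancellation $a_{2t+1}=\tfrac1{12}$ is the pleasant accident that makes the closed form as clean as it is.
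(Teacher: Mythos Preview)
Your proof is correct and follows essentially the same route as the paper: both derive $\E[S_n^+]=\E[L_n^+-M_n^+]$ from Proposition~\ref{satz:einhalb} via the identity $\E[S_n^++M_n^++L_n^+]=2\E[L_n^+]$, and both compute $\Prob(L_i>S_i,\ S_{i+1}=S_i+1)$ using the uniform distribution~(\ref{comb_1}) and the urn transition weight $(s+1)/(i+3)$. The only cosmetic difference is that where the paper records the result as $\tfrac{i(i+4)}{12(i+1)(i+3)}$ for even $i$ and $\tfrac1{12}$ for odd $i$ and then appeals to induction, you evaluate $A_i$ via hockey-stick and do the telescoping explicitly --- which is arguably more transparent but not a different idea.
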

\begin{proof}
Similarly to the derivation of (\ref{comb_3}) we find
\begin{align}
	\Prob(L_i>S_i, S_{i+1}=S_i+1)& = \sum_{\ell=1}^{i}\sum_{\substack{s=0\\s\leq \ell-1}}^{i-\ell}{\frac{2}{(i+1)(i+2)}\cdot\frac{s+1}{i+3}}\nonumber \\
	 &= \begin{cases} \frac{i(i+4)}{12(i+1)(i+3)} ,\quad &\text{ if $i$ is even,} \\
                  \frac{1}{12} ,\quad &\text{ if $i$ is odd.} \end{cases}\label{comb_4}
\end{align}
By definition,
\begin{align*}
	\E[S_n^+]
  = \mathbb{E}\left[\sum_{i=1}^{n-3}{\mathbbm{1}_{\left\{L_i>S_i,\ S_{i+1}=S_i+1\right\}}}\right]= \sum_{i=1}^{n-3}\mathbb{P}(L_i>S_i,\ S_{i+1}=S_i+1).
\end{align*}
Now, we plug  in expression (\ref{comb_4}) on the right hand side of the latter display and use induction. This implies the claimed expression for $\mathbb{E}[S_n^+]$. Furthermore, using Proposition \ref{satz:einhalb} we have
\begin{align*} 
	\lefteqn{\E[S_n^++M_n^+-L_n^+]}\\
	& = \sum_{i=1}^{n-3}{\mathbb{P}(L_i>S_i,\ S_{i+1}=S_i+1)+\mathbb{P}(L_i>S_i,\ M_{i+1}=M_i+1)-\mathbb{P}(L_i>S_i,\ L_{i+1}=L_i+1)} \\
	& = \sum_{i=1}^{n-3}{\mathbb{P}(L_i>S_i)\Big(\mathbb{P}(L_{i+1}=L_i\,|\,L_i>S_i)
	-\mathbb{P}(L_{i+1}=L_i+1\,|\,L_i>S_i)\Big)} \\
	& = 0,
\end{align*}
hence $\mathbb{E}[S_n^+] = \mathbb{E}[L_n^+-M_n^+]$.
\end{proof}
Since by symmetry we have $\E[I^{(n)}_r]=(n-2)/3$ for $r=1,2,3$ we obtain from (\ref{tolls}) and Lemma \ref{lemma:spluse} that
\begin{align} 
	\mathbb{E}[T_S(n)] =\frac{5}{8}n+\frac{13}{16}- \frac{1}{16(n-\mathbbm{1}_{\left\{n\text{ even}\right\}})}. \label{tsn}
\end{align}
We are now prepared to prove Theorem \ref{satz:swape}.
\begin{proof} {\em (Theorem \ref{satz:swape})} We draw back to earlier analysis of dual-pivot quantities which led to similar recurrences for their expectations with other toll functions, see Wild \cite[Section 4.2.1]{WILDM}. We have the recurrence
\begin{align*} 
	\mathbb{E}[\mathcal{S}_n] = \frac{6}{n(n-1)} \sum_{k=0}^{n-2}{(n-k-1)\mathbb{E}[\mathcal{S}_k]}+\mathbb{E}[T_S(n)],\quad (n\ge 2)
\end{align*}
with $\E[\mathcal{S}_0]=\E[\mathcal{S}_1]=0$. From Wild's work we know that
\begin{align} 
	\mathbb{E}[\mathcal{S}_n] &=  \frac{1}{\binom{n}{4}} \sum_{i=5}^{n}{\binom{i}{4}\sum_{j=3}^{i-2}{\left(\mathbb{E}[T_S(j+2)]
	-\frac{2j}{j+2}\mathbb{E}[T_S(j+1)]+\frac{j(j-1)}{(j+2)(j+1)}
	\mathbb{E}[T_S(j)]\right)}} \notag \\
	&\quad~ +\frac{n+1}{5}\left(\mathbb{E}[T_S(4)]+\frac{1}{2}\mathbb{E}[T_S(2)]\right),\quad (n\ge 4). \label{closedform}
\end{align}
Now, the affine component $\frac{5}{8}n+\frac{13}{16}$ of  $\mathbb{E}[T_S(n)]$ in (\ref{tsn}) implies an overall contribution to $\mathbb{E}[\mathcal{S}_n]$, cf.~\cite[Section 4.2.1.1]{WILDM}, of
\begin{align} 
	\frac{3}{4}n\mathcal{H}_n + \frac{3}{4}\mathcal{H}_n - \frac{33}{40}n-\frac{27}{160}. \label{teil3}
\end{align}
To identify the contribution of summand $(16(n-\mathbbm{1}_{\left\{n\text{ even}\right\}}))^{-1}$ of $\mathbb{E}[T_S(n)]$ in (\ref{tsn}) to $\mathbb{E}[\mathcal{S}_n]$ we set $\mu_n:=(n-\mathbbm{1}_{\left\{n\text{ even}\right\}})^{-1}$. In view of (\ref{closedform}) we need to compute
\begin{align}\label{doub_sum}
	\frac{1}{\binom{n}{4}} \sum_{i=5}^{n}{\binom{i}{4}\sum_{j=3}^{i-2}{\left(\mu_{j+2}-\frac{2j}{j+2}\mu_{j+1}
	+\frac{j(j-1)}{(j+2)(j+1)}\mu_j\right)}}.
\end{align}
For the inner sum in (\ref{doub_sum}) we find, by plugging in the values of $\mu_j$, $\mu_{j+1}$ and $\mu_{j+2}$ and using induction, for $i\ge 5$, that
\begin{align*}
	\sum_{j=3}^{i-2}{\left(\mu_{j+2}-\frac{2j}{j+2}\mu_{j+1}+\frac{j(j-1)}{(j+2)(j+1)}
	\mu_j\right)} = -4\mathcal{H}_i^{\text{alt}}+(-1)^i\frac{2}{i}-\frac{17}{6}.
\end{align*}
Plugging this expression into (\ref{doub_sum}) and another induction imply, for $n\ge 4$, that
\begin{align*}
	\lefteqn{\frac{1}{\binom{n}{4}}\sum_{i=5}^{n}{\binom{i}{4}\left(-4\mathcal{H}_i^{\text{alt}}+\frac{2(-1)^i}{i} -\frac{17}{6}\right)}}\\
	& =  -\frac{17}{30}n-\frac{4}{5}n\mathcal{H}_n^{\text{alt}}-\frac{4}{5}\mathcal{H}_n^{\text{alt}}-\frac{17}{30}+\frac{2}{5}(-1)^n \\
	&\quad~ + \frac{\mathbbm{1}_{\left\{n\;\text{even}\right\}}}{20}
	\left(\frac{1}{n-3}+\frac{3}{n-1}\right) - \frac{\mathbbm{1}_{\left\{n\,\text{odd}\right\}}}{20}
	\left(\frac{3}{n-2}+\frac{1}{n}\right).
\end{align*}
Hence, the total contribution of  summand $(16(n-\mathbbm{1}_{\left\{n\text{ even}\right\}}))^{-1}$ of $\mathbb{E}[T_S(n)]$ to $\mathbb{E}[\mathcal{S}_n]$ is $\frac{1}{16}$ times
\begin{align*} 
	-\frac{2}{5}n-\frac{4}{5}(n+1)\mathcal{H}_n^{\text{alt}}-\frac{2}{5}
	+\frac{2}{5}(-1)^n
	+ \frac{\mathbbm{1}_{\left\{n\;\text{even}\right\}}}{20}
	\left(\frac{1}{n-3}+\frac{3}{n-1}\right) - \frac{\mathbbm{1}_{\left\{n\;\text{odd}\right\}}}{20}
	\left(\frac{3}{n-2}+\frac{1}{n}\right).
\end{align*}
Together with (\ref{teil3}) this implies the assertion.
\end{proof}
An alternative route to the formula for $\E[\mathcal{S}_n]$ in  Theorem \ref{satz:swape} is via generating functions as used in \cite{ADHKP16} to derive $\E[\mathcal{C}_n]$, see \cite{Straub} for details.\\

\noindent
{\bf Variances, correlation and limit laws.} The results stated in Theorems \ref{satz:dev_comp} and \ref{limit_law} follow from a standard application of the contraction method based on the expansions of $\E[\mathcal{C}_n]$ and $\E[\mathcal{S}_n]$  in (\ref{exp_mean_cn}) and in Theorem \ref{satz:swape}, respectively. We refer the reader to Section 4 of \cite{WNN15} where the application of the contraction method to dual-pivot quicksort is explained and worked out for  parameters with other toll functions. To apply this approach to $\mathcal{C}_n$ and   $\mathcal{S}_n$ we only need to derive the asymptotic $L_2$-behaviour of our toll functions in the recurrences for $\mathcal{C}_n$ and $\mathcal{S}_n$.

Note, that the number of key comparisons executed during the first partitioning stage of the dual-pivot quicksort ``Count'' is given by
\[ T_C(n) = n-1+I_2^{(n)}+I_3^{(n)}+S_n^+-L_n^+. \]
For the normalized quantities $X_0:=(0,0)^t$ and
\begin{align}\label{normalized}
	X_n:=\left(\frac{\mathcal{C}_n-\mathbb{E}[\mathcal{C}_n]}{n},\frac{\mathcal{S}_n-\mathbb{E}[\mathcal{S}_n]}{n}\right)^t,\quad n\ge 1,
\end{align}
we have the distributional recurrence
\begin{align}
	X_n \stackrel{d}{=} \sum_{r=1}^{3}{A_r^{(n)}X_{I_r^{(n)}}^{(r)}}+b^{(n)},\quad n\ge 2,
\end{align}
where $(X_j^{(1)})_{0\leq j\leq n}$, $(X_j^{(2)})_{0\leq j\leq n}$, $(X_j^{(3)})_{0\leq j\leq n}$ and $(b^{(n)},I^{(n)})$ are independent, $X_j^{(r)}$ is distributed as $X_j$ for $r=1,2,3$, $j\geq0$ and
\[ A_r^{(n)}=\frac{1}{n}\left(\begin{matrix}I_r^{(n)} & 0 \\ 0 & I_r^{(n)}\end{matrix}\right) \quad \text{and} \quad b^{(n)}=\frac{1}{n}\left( \begin{array}{c}T_C(n)-\mathbb{E}[\mathcal{C}_n]+\sum\limits_{r=1}^{3}{\mathbb{E}[\mathcal{C}_{I_r^{(n)}}|I_r^{(n)}]}\\T_S(n)-\mathbb{E}[\mathcal{S}_n]+\sum\limits_{r=1}^{3}{\mathbb{E}[\mathcal{S}_{I_r^{(n)}}|I_r^{(n)}]}\end{array} \right). \]
Recall that we have the $L_2$-convergence $A_r^{(n)}\to \mathrm{diag}(D_r,D_r)$ for $r=1,2,3$. A standard calculation based on the expansions of $\E[\mathcal{C}_n]$ and $\E[\mathcal{S}_n]$ implies that
\begin{align*}
	\frac{1}{n}\left( \begin{array}{c}-\mathbb{E}[\mathcal{C}_n]+\sum\limits_{r=1}^{3}
	{\mathbb{E}[\mathcal{C}_{I_r^{(n)}}|I_r^{(n)}]}\\
	-\mathbb{E}[\mathcal{S}_n]+\sum\limits_{r=1}^{3}{\mathbb{E}[\mathcal{S}_{I_r^{(n)}}|I_r^{(n)}]}\end{array} \right)\stackrel{L_2}{\longrightarrow}\left( \begin{array}{c} \frac{9}{5}\sum\limits_{r=1}^3 D_r \log D_r \\  \frac{3}{4}\sum\limits_{r=1}^3 D_r \log D_r \end{array} \right).
\end{align*}
Hence, it remains to identify the limits of $T_C(n)/n$  and $T_S(n)/n$ which reduces to the limit of $\frac{1}{n}(S_n^+,M_n^+,L_n^+)$.
\begin{lem}\label{lemma:SnMnLn} As $n\rightarrow\infty$, we have
\begin{align} 
	\frac{1}{n}(S_n^+,M_n^+,L_n^+) \stackrel{L_2}{\longrightarrow} \left( \mathbbm{1}_{\left\{D_3>D_1\right\}}D_1, \mathbbm{1}_{\left\{D_3>D_1\right\}}D_2, \mathbbm{1}_{\left\{D_3>D_1\right\}}D_3\right). 
\end{align}
\end{lem}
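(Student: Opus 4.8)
The plan is to carry out the proof inside the coupling already set up above, where the pivots are $U_1$ and $U_2$ and the three intervals they determine have lengths $(D_1,D_2,D_3)$ as in (\ref{spacings}). In this coupling $S_i$, $M_i$ and $L_i$ are the numbers of $U_3,\dots,U_{i+2}$ falling into the left, middle and right interval, respectively; conditionally on $(U_1,U_2)$ these are sums of i.i.d.\ Bernoulli indicators, so the strong law of large numbers gives $\tfrac1i(S_i,M_i,L_i)\to(D_1,D_2,D_3)$ almost surely as $i\to\infty$, and in particular $\tfrac1i(L_i-S_i)\to D_3-D_1$ a.s. Since $\Prob(D_1=D_3)=\Prob(U_1+U_2=1)=0$, on an event of probability one $L_i-S_i\to+\infty$ on $\{D_3>D_1\}$ and $L_i-S_i\to-\infty$ on $\{D_1>D_3\}$; hence there is an almost surely finite random index $I_0$ with $\mathbbm{1}_{\{L_i>S_i\}}=\mathbbm{1}_{\{D_3>D_1\}}$ for every $i\ge I_0$. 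This ``freezing'' of the indicator is the conceptual core of the argument.

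Next I would use the representations $S_n^+=\sum_{i=1}^{n-3}\mathbbm{1}_{\{L_i>S_i\}}(S_{i+1}-S_i)$ and $I_1^{(n)}=S_{n-2}=\sum_{i=0}^{n-3}(S_{i+1}-S_i)$ (and the analogues for $M$ and $L$), so that $I_1^{(n)}-S_n^+=(S_1-S_0)+\sum_{i=1}^{n-3}\mathbbm{1}_{\{L_i\le S_i\}}(S_{i+1}-S_i)$. On $\{D_3>D_1\}$ all summands with $i\ge I_0$ vanish, so $I_1^{(n)}-S_n^+$ is bounded by an almost surely finite random variable (of order $I_0$); on $\{D_1>D_3\}$ instead all summands of $S_n^+$ with $i\ge I_0$ vanish, so $S_n^+$ itself is bounded by an a.s.\ finite random variable. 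Dividing by $n$, letting $n\to\infty$, and using that $\tfrac1n I_1^{(n)}\to D_1$ a.s.\ (again by the strong law of large numbers in this coupling), this yields $\tfrac1n S_n^+\to\mathbbm{1}_{\{D_3>D_1\}}D_1$ almost surely. The identical computation with the left interval replaced by the middle and the right one (and $I_1^{(n)}$ by $I_2^{(n)}$, resp.\ $I_3^{(n)}$) gives $\tfrac1n M_n^+\to\mathbbm{1}_{\{D_3>D_1\}}D_2$ and $\tfrac1n L_n^+\to\mathbbm{1}_{\{D_3>D_1\}}D_3$ almost surely, and hence almost sure convergence of the whole vector to the claimed limit.

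Finally I would promote this to $L_2$: since $0\le S_n^+,M_n^+,L_n^+\le n$, the normalized vector $\tfrac1n(S_n^+,M_n^+,L_n^+)$ takes values in $[0,1]^3$, so the almost sure convergence upgrades to $L_2$-convergence, jointly in the three coordinates, by dominated convergence. The only step requiring genuine care is the ``finitely many exceptional indices'' bookkeeping, namely checking that $I_0$ is almost surely finite and that the resulting $\mathrm{O}(1)$ discrepancy between $S_n^+$ and $S_{n-2}$ (respectively between $S_n^+$ and $0$) is indeed negligible after division by $n$; everything else is just the strong law of large numbers together with the a.s.\ boundedness of the normalized quantities.
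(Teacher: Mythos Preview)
Your proof is correct and follows essentially the same route as the paper's: both condition on the pivot spacings, use the strong law of large numbers (the paper phrases this via the drift of the random walk $W_i=L_i-S_i$) to conclude that the indicator $\mathbbm{1}_{\{L_i>S_i\}}$ eventually freezes at $\mathbbm{1}_{\{D_3>D_1\}}$, sandwich $S_n^+/n$ against $I_1^{(n)}/n$ (resp.\ $0$) up to an $\mathrm{O}(I_0)/n$ error, and then upgrade almost sure convergence to $L_2$ by dominated convergence using $0\le S_n^+,M_n^+,L_n^+\le n$. The only cosmetic difference is that the paper writes the squeeze $\frac{I_1^{(n)}-n_0}{n}\le \frac{S_n^+}{n}\le \frac{I_1^{(n)}}{n}$ directly, whereas you spell out the telescoping representation of $I_1^{(n)}-S_n^+$.
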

\begin{proof} We consider the random lattice path $W=(W_i)_{i\geq 0}$ defined by $W_0:=0$ and, for $i\geq1$,
\[ W_i = \sum_{j=1}^{i}\left(\mathbbm{1}_{\left\{\text{the $j$-th classified element is large}\right\}}-\mathbbm{1}_{\left\{\text{the $j$-th classified element is small}\right\}}\right),\] i.e., $W_i$ is the difference of the number of large and small elements after having classified $i$ elements. Conditionally on  $(D_1,D_2,D_3)=(d_1,d_2,d_3)$, the process $W$ is a simple random walk on $\mathbb{Z}$ going one step down with probability $d_1$, staying at its current state with probability $d_2$ and  going one step up with probability $d_3$.

We first consider the case $d_1<d_3$. Here, the random walk $(W_i)$ has a positive drift. From the strong law of large numbers, we obtain that $W_i$ tends to $+\infty$ almost surely. This implies that on $\left\{D_3>D_1\right\}$, there exists almost surely some random $n_0\in\mathbb{N}$ such that $W_i>0$ for all $i\geq n_0$. This means that  from index $n_0$ on we always compare to pivot $q$ first. Hence, on $\left\{D_3>D_1\right\}$ we obtain
\[ \frac{I_1^{(n)}-n_0}{n} \leq \frac{S_n^+}{n} \leq \frac{I_1^{(n)}}{n}. \]
Thus, on $\left\{D_3>D_1\right\}$, we have that  $S_n^+/n$ converges to $D_1$ almost surely.

Similarly, if $d_1>d_3$ the random walk $W$ has a negative drift and we find   $S_n^+/n \to 0$ almost surely on $\left\{D_3<D_1\right\}$.

Overall, using dominated convergence, we find $$\frac{S_n^+}{n}\stackrel{L_2}{\longrightarrow} \mathbbm{1}_{\left\{D_3>D_1\right\}}D_1.$$ The convergences of the other two components in the formulation of the present lemma follow similarly.
\end{proof}
Based on the $L_2$-convergences derived in the present subsection the results of Theorems  \ref{satz:dev_comp} and \ref{limit_law} follow along the lines of Section 4 of \cite{WNN15}.

Corollary \ref{coro_dens} follows from a general theorem of Leckey \cite{Leckey} based on techniques of Fill and Janson \cite{FiJa00}.\\

\noindent
{\bf Acknowledgement:} We thank Sebastian Wild for ongoing advice, in particular to count a {\em rotate3}-operation as $\nicefrac{3}{2}$ of a swap.\\ \vspace{10cm}

\begin{algorithm}
\caption{Dual-Pivot Quicksort Algorithm ``Count". Slight modifications to the version of Aum\"uller et al.~\cite[Algorithm~1]{ADHKP16} are marked and commented on in blue color.}
\begin{algorithmic}[1]\label{alg_count}
\Procedure{Count}{$A$, $\mathit{left}$, $\mathit{right}$}
\If {$\mathit{right} \leq \mathit{left}$} \State return \EndIf
\If {$A[\mathit{right}]<A[\mathit{left}]$} \color[rgb]{0.1,0.1,0.9}  \State $p \gets A[\mathit{right}]$; $q \gets A[\mathit{left}]$ \quad \footnotesize // in \cite{ADHKP16}: swap $A[\mathit{left}]$ and $A[\mathit{right}]$ \normalsize
\Else \State $p \gets A[\mathit{left}]$; $q \gets A[\mathit{right}]$ \quad \footnotesize // in \cite{ADHKP16}: $p\gets A[\mathit{left}]$; $q\gets A[\mathit{right}]$ instead of lines 6 and 7  \normalsize \EndIf
\color[rgb]{0,0,0} \State $i \gets \mathit{left}+1$; $k \gets \mathit{right}-1$; $j \gets i$
\State $d \gets 0$
\While{$j\leq k$}
  \If{$d\geq0$}
	  \If{$A[j]<p$}
		  \State swap $A[i]$ and $A[j]$
			\State $i\gets i+1$; $j\gets j+1$; $d\gets d+1$
		\Else
		  \If{$A[j]<q$}
			  \State $j\gets j+1$
			\Else
			  \State swap $A[j]$ and $A[k]$
				\State $k\gets k-1$; $d\gets d-1$
			\EndIf
		\EndIf
	\Else
	  \If{$A[k]>q$}
		  \State $k\gets k-1$; $d\gets d-1$
		\Else
		  \If{$A[k]<p$}
			  \State // Perform a cyclic rotation to the left, i.e.,
				\State // tmp $\gets A[k]$; $A[k]$ $\gets$ $A[j]$; $A[j]$ $\gets$ $A[i]$; $A[i] \gets$ tmp
			  \State \emph{rotate3}($A[k],A[j],A[i]$)  
				\State $i\gets i+1$; $d\gets d+1$
			\Else
			  \State swap $A[j]$ and $A[k]$
			\EndIf
			\State $j\gets j+1$
		\EndIf
	\EndIf
\EndWhile
\color[rgb]{0.1,0.1,0.9} \State $A[\mathit{left}] \gets A[i-1]$ and $A[i-1] \gets p$ \quad\quad \footnotesize // in \cite{ADHKP16}: swap $A[\mathit{left}]$ and $A[i-1]$\normalsize
\State $A[\mathit{right}] \gets A[k+1]$ and $A[k+1] \gets q$ \quad  \footnotesize// in \cite{ADHKP16}: swap $A[\mathit{right}]$ and $A[k+1]$ \normalsize
\color[rgb]{0,0,0} \State \textsc{Count}($A$, $\mathit{left}$, $i-2$)
\State \textsc{Count}($A$, $i$, $k$)
\State \textsc{Count}($A$, $k+2$, $\mathit{right}$)
\EndProcedure
\end{algorithmic}
\end{algorithm}

\end{document}